\newtheorem{prop}{Proposition}
\newtheorem{obs}[prop]{Observation}
\newtheorem{exm}[prop]{Example}
\newtheorem{cor}[prop]{Corollary}
\newtheorem{thm}[prop]{Theorem}
\newcommand{\stratSpace}{\mathcal{S}}
\newcommand{\DeltaPhi}{\ensuremath{\sum_{e \in S_i \setminus S_i'} a_e n_e  - \sum_{e \in S_i' \setminus S_i} a_e (n_e +1)}}
\newcommand{\DeltaC}{\ensuremath{ \sum_{e \in S_i \setminus S_i'} (2a_e n_e - a_e) - \sum_{e \in  S_i' \setminus S_i} (2 a_e n_e + a_e )}}
\newcommand{\NP}{{\sf NP}}
\newcommand{\PLS}{{\sf PLS}}
\newcommand{\junk}[1]{}
\title{Altruism in Atomic Congestion Games}%
\author{Martin Hoefer\thanks{Computer Science Department, Stanford
    University, USA. Supported by a fellowship within the
    Postdoc-Program of the German Academic Exchange Service (DAAD).}
  \and Alexander Skopalik\thanks{Dept. of Computer Science, RWTH
    Aachen University, Germany. Supported in part by the German
    Israeli Foundation (GIF) under contract 877/05.}}
\begin{document}
\maketitle
\thispagestyle{empty}
\begin{abstract}
  This paper studies the effects of introducing altruistic agents into
  atomic congestion games. Altruistic behavior is modeled by a
  trade-off between selfish and social objectives. In particular, we
  assume agents optimize a linear combination of personal delay of a
  strategy and the resulting social cost. Our model can be embedded in
  the framework of congestion games with player-specific latency
  functions. Stable states are the Nash equilibria of these games, and
  we examine their existence and the convergence of sequential
  best-response dynamics. Previous work shows that for symmetric
  singleton games with convex delays Nash equilibria are guaranteed to
  exist. For concave delay functions we observe that there are games
  without Nash equilibria and provide a polynomial time algorithm to
  decide existence for symmetric singleton games with arbitrary delay
  functions. Our algorithm can be extended to compute best and worst
  Nash equilibria if they exist. For more general congestion games
  existence becomes \NP-hard to decide, even for symmetric network
  games with quadratic delay functions. Perhaps surprisingly, if all
  delay functions are linear, then there is always a Nash equilibrium
  in any congestion game with altruists and any better-response
  dynamics converges.

  In addition to these results for uncoordinated dynamics, we consider
  a scenario in which a central altruistic institution can motivate
  agents to act altruistically. We provide constructive and hardness
  results for finding the minimum number of altruists to stabilize an
  optimal congestion profile and more general mechanisms to
  incentivize agents to adopt favorable behavior.
\end{abstract}


\section{Introduction}
Algorithmic game theory has been focused on game-theoretic models for
a variety of important applications in the Internet, e.g.~selfish
routing~\cite{RoughTar02,Awerbuch05,Christo05}, network
creation~\cite{Anshe04}, as well as aspects of
e-commerce~\cite{Guruswami05} and social networks~\cite{Ghosh08}. A
fundamental assumption in these games, however, is that all agents are
\emph{selfish}. Their goals are restricted to optimizing their direct
personal benefit, e.g.~their personal delay in a routing game. The
assumption of selfishness in the preferences of agents is found in the
vast majority of present work on economic aspects of the
Internet. However, this assumption has been repeatedly questioned by
economists and psychologists. In experiments it has been observed that
participant behavior can be quite complex and contradictive to
selfishness~\cite{Ledyard97,Levine98}. Various explanations have been
given for this phenomenon, e.g.~senses of fairness~\cite{Fehr99},
reciprocity among agents~\cite{Gintis05}, or spite and
altruism~\cite{Levine98,Eshel98}.

Prominent developments in the Internet like Wikipedia, open source
software development, or Web 2.0 applications involve or explicitly
rely on voluntary participation and contributions towards a joint
project without direct personal benefit. These examples display forms
of \emph{altruism}, in which agents accept certain personal burdens
(e.g.~by investing time, attention, and money) to improve a common
outcome. While malicious behavior has been considered recently for
instance in nonatomic routing~\cite{Karakostas07,Babaioff07,Chen08},
virus incoulation~\cite{Moscibroda06Evil}, or bayesian congestion
games~\cite{Gairing08}, a deeper analysis of the effects of altruistic
agents on competitive dynamics in algorithmic game theory is still
missing.

We consider and analyze a model of altruism inspired by
Ledyard~\cite[p. 154]{Ledyard97}, and recently studied for non-atomic
routing games by Chen and Kempe~\cite{Chen08}. Each agent $i$ is
assumed to be partly selfish and partly altruistic. Her incentive is
to optimize a linear combination of personal cost and social cost,
given by the sum of cost values of all agents. The strength of
altruism of each agent $i$ is captured by her \emph{altruism level}
$\beta_i \in [0,1]$, where $\beta_i = 0$ results in a purely selfish
and $\beta_i = 1$ in a purely altruistic agent.

Chen and Kempe~\cite{Chen08} proved that in non-atomic routing games
Nash equilibria are always guaranteed to exist, even for partially
spiteful users, and analyzed the price of anarchy for parallel link
networks. In our paper, we conduct the first study of altruistic
agents in atomic congestion games, a well-studied model for resource
sharing. A standard congestion game is given by a set $N$ of myopic
selfish users and a set $E$ of resources. Each resource $e$ has a
non-decreasing delay function $d_e$. Every agent $i$ can pick a
strategy $S_i$ from a set of possible strategies $\stratSpace_i
\subseteq 2^E$, which means she allocates the set $S_i$ of resources
(e.g.~a path in a network). She then experiences a delay corresponding
to the total delay on all resources in $S_i$, which in turn depends on
the number of agents that allocate each resource. Each agent strives
to pick a strategy minimizing her experienced delay. A stable state in
such a game is a pure Nash equilibrium, in which each agent picks
exactly one strategy, and no agent can decrease her delay by
unilaterally changing her strategy. The study of congestion games
received a lot of attention in recent years, mostly because of the
intuitive formulation and their appealing analytical properties. In
particular, they always possess a pure Nash equilibrium and every
sequential better-response dynamics converges.

As one might expect, the presence of altruists can significantly alter
the convergence and existence guarantees of pure Nash equilibria in
congestion games. After a formal definition of \emph{congestion games
  with altruists} in Section~\ref{sect:model}, we concentrate on pure
equilibria and leave a study of mixed Nash equilibria for future
work. Our results are as follows.

It is a simple exercise to observe that even in a \emph{singleton
  game}, in which each strategy consists of a single resource, and for
\emph{symmetric} agents, where each agent has the same set of
strategies, a Nash equilibrium can be absent. This is the case even
for \emph{pure altruists and egoists}, i.e.~a population of agents
which are either purely altruistic or purely selfish and their
$\beta_i \in \{0,1\}$. However, we show in
Section~\ref{sect:singleton} that such games admit a polynomial time
algorithm to decide the existence problem. Furthermore, our algorithm
can be adapted to compute the Nash equilibrium with best and worst
social cost if it exists, for any agent population with a constant
number of different altruism levels.

For slightly more general \emph{asymmetric singleton games}, in which
strategy spaces of agents differ, we show in
Section~\ref{sect:singleton} that deciding the existence of Nash
equilibria becomes \NP-hard. Nevertheless, for the important subclass
of convex delay functions, previous results imply that for any agent
population a Nash equilibrium exists and can be obtained in polynomial
time. In contrast, we show in Section~\ref{sect:general} that
convexity of delay functions is not sufficient for more general
games. In particular, even for \emph{symmetric network games}, in
which strategies represent paths through a network, quadratic delay
functions and pure altruists, Nash equilibria can be absent and
deciding their existence is \NP-hard. Perhaps surprisingly, if all
delay functions are \emph{linear}, then there is a potential
function. Thus, for every agent population Nash equilibria exist and
better-response dynamics converges.

In addition to these results for uncoordinated dynamics, in
Section~\ref{sect:stabilize} we consider a slightly more coordinated
scenario, in which there is a central institution striving to obtain a
good outcome. An obvious way to induce favorable behavior is to
convince agents to act altruistically. In this context a natural
question is how many altruists are required to stabilize a social
optimum. This has been considered under the name ``price of optimum''
in~\cite{Kaporis06} for Stackelberg routing in nonatomic congestion
games. As a Nash equilibrium in atomic games is not necessarily
unique, we obtain two measures - an \emph{optimal stability
  threshold}, which is the minimum number of altruists such that there
is \emph{any} optimal Nash equilibrium, and an \emph{optimal anarchy
  threshold}, which asks for the minimum number of altruists such that
\emph{every} Nash equilibrium is optimal. For symmetric singleton
games, we adapt our algorithm for computing Nash equilibria to
determine both thresholds in polynomial time.

In our model the optimal anarchy threshold might not be well-defined
even for singleton games. If all agents are altruists, there are
suboptimal local optima in symmetric games with concave delays, or in
asymmetric games with linear delays. Hence, even by making all agents
altruists, the worst Nash equilibrium sometimes remains suboptimal. In
contrast, we adapt the idea of the optimal stability threshold to a
very general scenario, in which we can find a stable state with a
given, not necessarily optimal, congestion profile. Each agent has a
personalized \emph{stability cost} for accepting a strategy under the
given congestions. We provide an incentive compatible mechanism to
determine an allocation of agents to strategies with minimum total
stability cost. Unfortunately, such a general result is restricted to
the case of singleton games. Even for symmetric network games on
series-parallel graphs, we show that the problem of determining the
optimal stability threshold is \NP-hard.

\section{Model and Initial Results}
\label{sect:model}
We consider congestion games with altruists. A \emph{congestion game
  with altruists} $G$ is given by a set $N$ of $n$ agents and a set
$E$ of $m$ resources. Each agent $i$ has a set $\stratSpace_i
\subseteq 2^E$ of strategies. In a \emph{singleton} congestion game
each agent has only singleton strategies $\stratSpace_i \subseteq
E$. A vector of strategies $S = (S_1,\ldots,S_n)$ is called a
\emph{state}. For a state we denote by $n_e$ the congestion, i.e.~the
number of agents using a resource $e$ in their strategy. Each resource
$e$ has a \emph{latency} or \emph{delay} function $d_e(n_e)$, and the
\emph{delay for an agent} $i$ playing $S_i$ in state $S$ is $d_i(S) =
\sum_{e \in S_i} d_e(n_e)$. The \emph{social cost} of a state is the
total delay of all agents $c(S) = \sum_{i \in N} \sum_{e \in S_i}
d_e(n_e) = \sum_{e \in E} n_e d_e(n_e)$. Each agent $i$ has an
\emph{altruism level} of $\beta_i \in [0,1]$, and her \emph{individual
  cost} is $c_i(S) = \beta_i c(S) + (1-\beta_i) d_i(S)$. We call an
agent $i$ an \emph{egoist} if $\beta_i = 0$ and a
\emph{$\beta_i$-altruist} otherwise. A \emph{(pure) altruist} has
$\beta_i = 1$, a \emph{(pure) egoist} has $\beta_i = 0$. A game $G$
\emph{with only pure altruists and egoists} is a game, in which
$\beta_i \in \{0,1\}$ for all $i \in N$. A game $G$ is said to have
\emph{$\beta$-uniform} altruists if $\beta_i = \beta \in [0,1]$ for
every agent $i \in N$. A (pure) \emph{Nash equilibrium} is a state
$S$, in which no agent $i$ can unilaterally decrease her individual
cost by unilaterally changing her strategy. We exclusively consider
pure equilibria in this paper.

If all agents are egoists, the game is a regular congestion game,
which has an exact potential function $\Phi(S) = \sum_{e \in E}
\sum_{x=1}^{n_e} d_e(x)$~\cite{Rosenthal73}. Thus, existence of Nash
equilibria and convergence of iterative better-response dynamics are
guaranteed. Obviously, if all agents are altruists, Nash equilibria
correspond to local optima of the social cost function $c$ with
respect to a local neighborhood consisting of single player strategy
changes. Hence, existence and convergence are also guaranteed. This
directly implies the same properties for $\beta$-uniform games, in
which an exact potential function is $\Phi_\beta(S) = (1-\beta)
\Phi(S) + \beta c(S)$.

In general, however, Nash equilibria might not exist. 

\begin{obs}
  \label{obs:noNE}
  There are symmetric singleton congestion games with only pure
  altruists and egoists without a Nash equilibrium.
\end{obs}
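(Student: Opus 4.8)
The plan is to exhibit a small explicit instance and then argue by exhaustive case analysis that no state is a Nash equilibrium. Since we are allowed only singleton strategies, a symmetric strategy space, and altruism levels in $\{0,1\}$, the instance should be as small as possible: I would try two resources $e_1, e_2$ with $\stratSpace_i = \{e_1, e_2\}$ for every agent, and a mix of a few pure egoists and a few pure altruists. The delay functions must be chosen so that the ``selfish'' force and the ``social'' force pull in incompatible directions. A natural choice is to make one resource, say $e_1$, have a steep delay that is bad for the individual at high congestion, and $e_2$ a delay that is individually attractive but whose marginal contribution to social cost $n_e d_e(n_e)$ is large once enough agents pile on. Concretely I would look for delays (step-like or piecewise constant functions are simplest, and are permitted since the observation only claims \emph{some} such game) so that an egoist always wants to flee a crowded $e_1$ to $e_2$, while an altruist on $e_2$ always wants to move to $e_1$ because the social cost $\sum_e n_e d_e(n_e)$ decreases.

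The key steps, in order, are: (i) fix the instance — number of egoists, number of altruists, the two resources, and their delay functions; (ii) enumerate all states up to the symmetry of swapping $e_1 \leftrightarrow e_2$ and permuting agents of the same type, so that a state is described just by how many egoists and how many altruists sit on $e_1$; (iii) for each such state compute the individual cost $c_i(S) = d_i(S)$ for an egoist and $c_i(S) = c(S) = \sum_e n_e d_e(n_e)$ for an altruist, and check whether some agent has a profitable unilateral deviation; (iv) conclude that in every state at least one agent (an egoist or an altruist) strictly benefits from switching, so there is a deviation cycle and hence no Nash equilibrium. It may help to display the resulting ``improvement graph'' on the handful of states to make the cyclic structure visible.

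The main obstacle is the tuning in step (i): the delay functions must simultaneously guarantee that (a) from any configuration where the egoists' choice is ``stable'' the altruists want to move, and (b) the altruists' move creates a new configuration where some egoist now wants to move, and (c) this egoist move returns the altruists to wanting to move, closing a cycle with no escape. Balancing these constraints — especially making sure no state accidentally satisfies everyone — requires choosing the jumps in $d_{e_1}$ and $d_{e_2}$ carefully relative to the counts of each agent type; with two resources and, say, one altruist and two egoists (or two altruists and one egoist) one can usually find such numbers, but verifying there is genuinely no fixed point is the part that takes care. Everything else is a finite check.
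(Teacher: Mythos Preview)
Your approach is exactly the paper's: exhibit a small two-resource symmetric singleton instance with a mix of pure egoists and altruists, then do a finite case check to show every state has a profitable deviation. The paper's concrete choice is three egoists and one altruist with \emph{identical} delays $d_e=d_f$ given by $d(1)=4$, $d(2)=8$, $d(3)=9$, $d(4)=11$; using symmetric delays halves the case analysis, and four agents (rather than your suggested three) is what makes the tuning go through---the $(2,2)$ split has social cost $32$ but the altruist prefers the $(3,1)$ split of cost $31$, which in turn is unstable for an egoist.
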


\begin{exm} \rm
  \label{exm:noNE}
  Consider a game with two resources $e$ and $f$, three egoists and
  one (pure) altruist. The delay functions are $d_e(x) = d_f(x)$ with
  $d_e(1) = 4$, $d_e(2) = 8$, $d_e(3) = 9$, and $d_e(4) = 11$. Then,
  in equilibrium each resource must be allocated by at least one
  egoist. In case there are two agents on each resource, the social
  cost is 32. In this case the altruist is motivated to change as the
  resulting cost is 31. In that case, however, one of the egoists on
  the resource with congestion 3 has an incentive to change. Thus, no
  Nash equilibrium will evolve.
\end{exm}

Our interest is thus to characterize the games that have Nash
equilibria. Towards this end we observe that an altruistic congestion
game can be cast as a congestion game with player-specific latency
functions~\cite{Milchtaich96}. For simplicity consider a game with
only pure altruists and egoists. An altruist moves from $S_i$ to
$S'_i$ if the decrease in total delay $n_e d_e(n_e)$ on the resources
$e \in S_i - S'_i$ she is leaving exceeds the increase on resources $e
\in S'_i - S_i$ she is migrating to. Hence, altruists can be seen as
myopic selfish agents with $c_i(S) = d'_i(S) = \sum_{e \in S_i}
d'_e(n_e)$ with $d'_e(n_e) = n_e d_e(n_e) - (n_e - 1)d_e(n_e - 1)$,
for $n_e > 0$. We set $d'_e(0) = 0$. Naturally, a $\beta_i$-altruist
corresponds to a selfish agent with player-specific function $c_i(S) =
(1-\beta_i) d_i(S) + \beta_i d'_i(S)$. Thus, our games can be embedded
into the class of player-specific congestion games. For some classes
of these games it is known that Nash equilibria always exist. In
particular, non-existence in Example~\ref{exm:noNE} is due to the fact
that the individual delay function for the altruist is not
monotone. Monotonicity holds, in particular, if delay functions are
convex. In this case, it is known that for matroid games, in which the
strategy space of each agent is a matroid, existence of a Nash
equilibrium is guaranteed~\cite{Acker07}.

\begin{cor}{\cite{Milchtaich96, Acker07}}
  For any matroid congestion game with altruists and convex delay
  functions a Nash equilibrium exists and can be computed in
  polynomial time.
\end{cor}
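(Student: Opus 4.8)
The plan is to reduce the statement to a known result about player-specific congestion games by verifying that the embedding described just before the corollary produces a game that falls into the scope of Ackermann, Röglin and Vöcking's theorem on matroid congestion games with player-specific latencies~\cite{Acker07}. Recall from the discussion above that a $\beta_i$-altruist is modeled as a selfish player whose player-specific latency on resource $e$ is $d^{\beta_i}_e(n_e) = (1-\beta_i)\,d_e(n_e) + \beta_i\, d'_e(n_e)$, where $d'_e(n_e) = n_e d_e(n_e) - (n_e-1)d_e(n_e-1)$ for $n_e > 0$ and $d'_e(0)=0$. The key structural fact I would establish is that when $d_e$ is non-decreasing and convex, each $d^{\beta_i}_e$ is non-decreasing in $n_e$; then the cited result applies verbatim, since it guarantees existence of a Nash equilibrium (and polynomial-time computability via lazy best-response from any starting state) for matroid games whose player-specific latency functions are non-decreasing.

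The first step is thus the monotonicity lemma: if $d_e$ is convex (and defined on the integers, or extended piecewise-linearly), then $d'_e$ is non-decreasing. This is because $d'_e(n_e) - d'_e(n_e-1)$ can be rewritten, after expanding the telescoping expression, as a combination of second differences of $d_e$, which are non-negative exactly by convexity; a short calculation shows $d'_e(k) = \sum_{x=1}^{k}\big(d_e(x) + (x-1)(d_e(x)-d_e(x-1))\big)$ up to reindexing, so its increments are non-negative. Since $d^{\beta_i}_e$ is a convex combination (with fixed coefficients $1-\beta_i, \beta_i \ge 0$) of the two non-decreasing functions $d_e$ and $d'_e$, it is itself non-decreasing. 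Second, I would note that the matroid structure of each agent's strategy space is unchanged by the embedding — we only modified the latency functions, not the strategy sets — so the hypotheses of~\cite{Acker07} are met. Third, I would invoke that theorem to conclude existence, and recall that its proof yields a polynomial-time algorithm (repeatedly letting a player perform a best response that exchanges a single matroid element, which terminates in polynomially many steps), giving the computational claim.

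The main obstacle, and really the only nontrivial point, is the monotonicity lemma for $d'_e$ and the precise sense in which "convex" is meant for functions on $\nat$: one must be careful that $d'_e(0)=0 \le d'_e(1) = d_e(1)$ holds (true since $d_e \ge 0$), and that the increments $d'_e(n) - d'_e(n-1) = n d_e(n) - 2(n-1)d_e(n-1) + (n-2)d_e(n-2)$ are non-negative for all $n \ge 2$ — this is where convexity of $d_e$ is used, via $d_e(n) - d_e(n-1) \ge d_e(n-1) - d_e(n-2)$ together with monotonicity $d_e(n-1) \ge 0$. Everything else is bookkeeping: the reduction to player-specific games is already spelled out in the text, and the matroid hypothesis transfers without change, so the corollary follows immediately once the latency functions are shown to be monotone.
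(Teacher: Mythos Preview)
Your proposal is correct and follows exactly the approach the paper intends: the corollary is stated without proof because it is meant to fall out of the preceding paragraph's embedding into player-specific congestion games together with the monotonicity observation (``Monotonicity holds, in particular, if delay functions are convex'') and the cited results of Milchtaich and Ackermann--R\"oglin--V\"ocking. You have simply made explicit the one nontrivial step the paper leaves implicit, namely the verification that $d'_e(n)-d'_e(n-1)=n\,d_e(n)-2(n-1)d_e(n-1)+(n-2)d_e(n-2)\ge 0$ under convexity and non-decreasingness of $d_e$; note only that where you write ``monotonicity $d_e(n-1)\ge 0$'' you presumably mean the non-negativity of the increment $d_e(n-1)-d_e(n-2)$, which is what the argument actually uses.
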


\section{Singleton Congestion Games}
\label{sect:singleton}
In the previous section we have seen that there are symmetric
singleton congestion games with only pure altruists and egoists with
and without Nash equilibria. For this class of games we can decide the
existence of Nash equilibria in polynomial time. In addition, we can
compute a Nash equilibrium with minimum and maximum social cost if
they exist.

\begin{thm}
  \label{thm:singleALLES}
  For symmetric singleton games with only pure altruists and egoists
  there is a polynomial time algorithm to decide if a Nash equilibrium
  exists and to compute the best and the worst Nash equilibrium.
\end{thm}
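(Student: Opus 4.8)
The plan is to characterize Nash equilibria of a symmetric singleton game with pure altruists and egoists purely in terms of the \emph{congestion vector}, i.e.\ the multiset of load values $(n_e)_{e\in E}$, rather than the full assignment of agents to resources. The key observation is that in a symmetric singleton game, once we fix how many agents of each type (altruist / egoist) sit on each resource, the equilibrium conditions decompose into simple local inequalities: an egoist on $e$ is happy iff $d_e(n_e) \le d_f(n_f+1)$ for every other resource $f$; an altruist on $e$ is happy iff $d'_e(n_e) \le d'_f(n_f+1)$ for every other $f$, using the marginal-cost function $d'_e$ from Section~\ref{sect:model}. Since all agents of a given type are interchangeable, what matters is only, for each resource, the pair $(a_e, b_e)$ where $a_e$ is the number of altruists and $b_e$ the number of egoists on $e$, with $n_e = a_e + b_e$. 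So the first step is to prove this ``it suffices to guess the congestion vector (with type split)'' reduction carefully.

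Second, I would set up a search over candidate congestion profiles. A naive enumeration is exponential, so instead I would fix two global threshold-type quantities: let $D = \min_{f : n_f < n}\, d_f(n_f+1)$ be the smallest ``delay an egoist would see after deviating'' and $D' = \min_{f : n_f < n}\, d'_f(n_f+1)$ the analogous quantity for altruists. In any Nash equilibrium, a resource carrying at least one egoist must have $d_e(n_e)\le D$, and a resource carrying at least one altruist must have $d'_e(n_e)\le D'$; conversely, resources that are lightly loaded must be ``blocked'' in the sense that adding one more of the relevant type would violate the inequality defining $D$ or $D'$. The plan is to guess $D$ and $D'$ — there are only polynomially many relevant values, since $D$ is one of the at most $nm$ numbers $d_f(k)$ and similarly for $D'$ — and then, for each guessed pair, decide independently for each resource what loads $(a_e,b_e)$ are consistent with being stable under those thresholds. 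This yields, per resource, an interval (or small set) of feasible $(a_e,b_e)$, and the remaining task is to pick one feasible pair per resource so that $\sum_e a_e$ equals the total number of altruists and $\sum_e b_e$ the total number of egoists. That feasibility-with-two-sum-constraints step is a small flow / DP computation.

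Third, for the best/worst Nash equilibrium, I would run the same procedure but, among all congestion profiles that pass the test for some $(D,D')$, select the one minimizing (resp.\ maximizing) $c(S) = \sum_e n_e d_e(n_e)$; this is again handled by replacing the feasibility DP with an optimization DP over resources, carrying the partial sums of altruists and egoists as the DP state. The extension to a constant number $k$ of distinct altruism levels $\beta^{(1)},\dots,\beta^{(k)}$ works the same way: a resource is now described by a $(k{+}1)$-tuple of type counts, each type $j$ has its own player-specific function $(1-\beta^{(j)})d_e + \beta^{(j)}d'_e$ and hence its own threshold $D^{(j)}$ to guess, and the DP state tracks $k{+}1$ partial sums — polynomial for fixed $k$.

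The main obstacle I anticipate is the interaction between the two (or $k{+}1$) threshold conditions on a single resource: a resource can simultaneously host altruists and egoists, so its load $(a_e,b_e)$ must satisfy \emph{both} $d_e(a_e+b_e)\le D$ and $d'_e(a_e+b_e)\le D'$ while also not being improvable by the would-be deviators sitting elsewhere, and because $d'_e$ is not monotone (the whole point of Example~\ref{exm:noNE}) these conditions need not carve out an interval in $n_e$. The careful part of the proof is therefore arguing that, for fixed $(D,D')$, the set of stable resource-configurations is still small and efficiently enumerable, and that the global consistency step (matching the prescribed totals of each agent type) can be solved exactly — I expect this to reduce to checking nonemptiness of a transportation-type polytope or, equivalently, to a polynomial dynamic program over the $m$ resources, and getting the ``blocked resource'' side-conditions exactly right (so that a guessed $(D,D')$ is actually \emph{realized} as the minimum, not merely an upper bound) is where the bookkeeping has to be done with care.
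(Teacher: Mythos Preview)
Your proposal is correct and follows essentially the same route as the paper: guess the global stability thresholds (polynomially many candidates), then run a dynamic program over the resources whose state is the vector of remaining agent counts per type, and for best/worst equilibrium replace the boolean DP entries by cost values. The paper guesses four thresholds ($d_0^{\max},d_0^{\min+},d_1^{\max},d_1^{\min+}$) rather than your two, but this is cosmetic; your worry about non-monotone $d'_e$ destroying interval structure is also a non-issue in the paper, since the DP simply enumerates all $O(n^2)$ type-split loads per resource rather than relying on any interval property.
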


\begin{proof}
  We first tackle the existence problem and present an approach
  similar to~\cite{Ieong05} based on dynamic programming. Suppose we
  are given a game $G$ with the set $N_0$ of $n_0$ egoists and the set
  $N_1$ of $n_1 = n-n_0$ altruists. For a state $S$ consider the set
  of resources $E_0 = \bigcup_{i \in N_0} S_i$ on which at least one
  egoist is located. The maximum delay of any resource on which an
  egoist is located is denoted $d_0^{max} = \max_{e \in E_0}
  d_e(n_e)$ and minimum delay of any resource if an additional agent
  is added $d_0^{min+} = \min_{e \in E} d_e(n_e + 1)$. Similarly,
  consider the set of resources $E_1 = \bigcup_{i \in N_1} S_i$. The
  maximum altruistic delay of any resource, on which an altruist is
  located, is denoted $d_1^{max} = \max_{e \in E_1} d'_e(n_e)$ and the
  minimum altruistic delay of any resource $d_1^{min+} = \min_{e \in
    E} d'_e(n_e + 1)$. A state is a Nash equilibrium if and only if
  \begin{equation}
    \label{eqn:NE}
    d_0^{max} \le d_0^{min+} \hspace{0.5cm} \mbox{ and } 
    \hspace{0.5cm} d_1^{max} \le d_1^{min+}. 
  \end{equation}
  This condition yields a separation property. Consider a Nash
  equilibrium, in which $n_{E',0}$ egoists and $n_{E',1}$ altruists
  are located on a subset $E' \subset E$ of resources. The Nash
  equilibrium respects the inequalities above for certain values
  $d_{0/1}^{max}$ and $d_{0/1}^{min+}$. Note that it is possible to
  completely change the assignment of agents in $E'$. If the new
  assignment respects the inequalities for the same values, it can be
  combined with the assignment on $E-E'$ and again a Nash equilibrium
  evolves.

  This property suggests the following approach to search for an
  equilibrium. Suppose the values for $d_{0/1}^{max}$ and
  $d_{0/1}^{min+}$ are given. Our algorithm adds resources $e$ one by
  one and tests the possible numbers of egoists and altruists that can
  be assigned to $e$. Suppose we have processed the resources from a
  subset $E'$ and have found the numbers of altruists and egoists, for
  which there is an assignment to resources $E'$ such that there is no
  violation of equations (\ref{eqn:NE}) for the given delay values. In
  this case, we know the feasible numbers of altruists and egoists
  that are left to be assigned to the remaining resources. Suppose we
  have marked these combinations of remaining agents in a boolean
  matrix $R$ of size $(n_0+1) \times (n_1+1)$. Here $r_{ij} = 1$ if
  and only if there is a feasible assignment of $n_0 - i$ egoists and
  $n_1 - j$ altruists to $E'$. For the new resource $e$ we now test
  all combinations $(n_{e,0}, n_{e,1})$ of altruists and egoists that
  can be allocated to $e$ such that the equations (\ref{eqn:NE})
  remain fulfilled. We then compile a new matrix $R'$ of the feasible
  combinations of remaining agents for the remaining resources $E - E'
  - \{e\}$. In particular, for each tuple $(n_{e,0}, n_{e,1})$ and
  each positive entry $r_{ij}$ of $R$ we check if $i - n_{e,0} \ge 0$
  and $j - n_{e,1} \ge 0$. If this holds, we set the entry of $R'$
  with index $(i - n_{e,0},j - n_{e,1})$ to 1. If $e$ is the last
  resource to be processed, we check if the resulting matrix $R'$ has
  a positive entry $r'_{0,0} = 1$. In this case a Nash equilibrium
  exists for the given values of $d_{\{0,1\}}^{max}$ and
  $d_{\{0,1\}}^{min+}$, otherwise it does not exist. Due to the
  separation property mentioned above, this approach succeeds to
  implicitly test all allocations that fulfill equations
  (\ref{eqn:NE}) for the given values. Finally, note that there are
  only at most $O(n_0^2n_1^2m^4)$ possible values for which we must
  run the algorithm.

  The separation property mentioned above also applies to the best or
  worst Nash equilibrium. In particular, consider the best Nash
  equilibrium $S$ that respects (\ref{eqn:NE}) for some fixed values
  $d_{\{0,1\}}^{max}$ and $d_{\{0,1\}}^{min+}$. Consider any subset of
  resources $E'$ with a number $n_{E',0}$ and $n_{E',1}$ of egoists
  and altruists, respectively. $S$ is the cheapest Nash equilibrium
  that respects (\ref{eqn:NE}) for the given values if and only if the
  assignment of $S$ in $E'$ is the cheapest assignment with $n_{E',0}$
  egoists and $n_{E',1}$ altruists that respects (\ref{eqn:NE}) for
  the values. Thus, we can adjust our approach as follows. For a set
  $E'$ of processed resources, instead of simply noting in $r_{ij}$
  that there is a feasible assignment to $E'$ that leaves $i$ egoists
  and $j$ altruists, we can remember the social cost of the cheapest
  of such assignments. Thus, the matrix $R$ is then a matrix of
  positive entries, for which we use a prohibitively large cost to
  identify infeasible combinations. When we compile a new matrix $R'$
  after testing all feasible assignments to a new resource $e$, we can
  denote in each entry the minimum cost that can be obtained for the
  respective combination. A similar argument works for computing the
  worst Nash equilibrium. This decides the existence question and
  finds the cost values of best and worst Nash equilibria. By tracing
  back the steps of the algorithm we can also discover the strategy
  choices of agents.
\end{proof}

Note that the previous proof can be extended to a constant number $k$
of different altruism levels. In this more general scenario we choose
the delay parameters for each level of altruists. For each resource
$e$ we then test all possible combinations of agents from the
different levels that we can allocate to a resource $e$ and satisfy
all bounds. The matrix $R$ changes in dimension to $(n_{\beta_1}+1)
\times \ldots \times (n_{\beta_k}+1)$ to account for all feasible
combinations of remaining agents. Finally, we need to test all
combinations of delay bounds. However, if $k$ is constant, all these
operations can be done in polynomial time.

\begin{cor}
  For symmetric singleton games with altruists and a constant number
  of different altruism levels, there is a polynomial time algorithm
  to decide if a Nash equilibrium exists and to compute the best and
  the worst Nash equilibrium.
\end{cor}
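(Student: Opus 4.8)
The plan is to generalize the dynamic program from the proof of Theorem~\ref{thm:singleALLES} from two agent types to $k$ types, where $k$ is constant, keeping $k$ parallel copies of the threshold machinery. First I would record, for each altruism level $\beta_\ell$ ($\ell = 1,\ldots,k$), the player-specific latency $c^{(\ell)}_e(x) = (1-\beta_\ell)\, d_e(x) + \beta_\ell\, d'_e(x)$ with $d'_e(x) = x\, d_e(x) - (x-1)\, d_e(x-1)$ as defined in Section~\ref{sect:model}. Given a state $S$, let $E_\ell = \bigcup_{i : \beta_i = \beta_\ell} S_i$ be the resources carrying at least one level-$\ell$ agent, and set $d_\ell^{max} = \max_{e \in E_\ell} c^{(\ell)}_e(n_e)$ and $d_\ell^{min+} = \min_{e \in E} c^{(\ell)}_e(n_e+1)$. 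Exactly as in~(\ref{eqn:NE}), $S$ is a Nash equilibrium if and only if $d_\ell^{max} \le d_\ell^{min+}$ for every $\ell$ with $E_\ell \neq \emptyset$: a level-$\ell$ agent has a profitable deviation iff some resource she occupies has player-specific delay exceeding the player-specific delay she would incur on some other resource after joining it.

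Second, the separation property carries over essentially verbatim: if two states respect the same $2k$ bound values $d_\ell^{max}, d_\ell^{min+}$, then on any subset $E' \subseteq E$ one state's assignment may be swapped for the other's and the spliced state is again a Nash equilibrium, since the bounds are checked resource-by-resource and type-by-type. This licenses the following dynamic program once the $2k$ bound values are fixed. Process the resources $e_1, \ldots, e_m$ one at a time, maintaining a $k$-dimensional boolean table $R$ of dimensions $(n_{\beta_1}+1)\times\cdots\times(n_{\beta_k}+1)$, where the entry indexed by $(i_1,\ldots,i_k)$ is $1$ iff the resources processed so far admit an assignment that leaves exactly $i_\ell$ level-$\ell$ agents unassigned for every $\ell$ and violates none of the $2k$ bounds. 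For the next resource $e$ I enumerate every tuple $(n_{e,1},\ldots,n_{e,k})$ of per-level agent counts that can be placed on $e$ without violating the bounds, and for each positive entry of $R$ whose coordinates dominate the tuple coordinatewise I set the corresponding shifted entry of the new table to $1$. After the last resource, an equilibrium with the chosen bounds exists iff the all-zero-coordinate entry is $1$.

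Third, to obtain best and worst equilibria I would replace the boolean entries of $R$ by the minimum (respectively maximum) social cost over all feasible partial assignments with the given leftover counts, using a prohibitively large sentinel value for infeasible combinations, and update each entry by adding the cost contributed by the agents newly placed on $e$; tracing back yields the strategy profile. Finally, each $d_\ell^{max}$ and each $d_\ell^{min+}$ takes one of at most $O(mn)$ values (a number $c^{(\ell)}_e(x)$ for some resource $e$ and congestion $x \le n$), so there are $O((mn)^{2k})$ bound combinations; for each, the DP runs in time $O(m \cdot n^{k} \cdot n^{k})$, since there are $O(n^k)$ per-resource tuples and $O(n^k)$ table entries. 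Taking the best/worst over all bound combinations that admit an equilibrium gives the answer, and for constant $k$ every factor is polynomial.

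I expect the main obstacle to be purely a matter of bookkeeping: verifying that the equilibrium characterization and the separation property are genuinely type-local — that fixing the $2k$ scalar bounds really decouples the resources — and then controlling the combinatorial blow-up so that the exponents depending on $k$ (the table dimension $n^k$, the per-resource tuple count $n^k$, and the $(mn)^{2k}$ choices of bounds) stay constant. No new ideas beyond the two-type case are required; the argument is a careful re-run of the proof of Theorem~\ref{thm:singleALLES} with $k$ copies of the threshold conditions running in parallel.
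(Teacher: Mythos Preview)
Your proposal is correct and follows exactly the approach the paper takes: the paper's own justification for this corollary is a short paragraph after Theorem~\ref{thm:singleALLES} that says to run $k$ copies of the threshold conditions, enlarge the table $R$ to dimensions $(n_{\beta_1}+1)\times\cdots\times(n_{\beta_k}+1)$, enumerate per-resource tuples, and iterate over all combinations of delay bounds --- precisely what you spell out. Your write-up is in fact more detailed than the paper's sketch (you make the player-specific latencies $c^{(\ell)}_e$, the separation property, and the running-time bound explicit), but there is no substantive difference in method.
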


As a byproduct, our approach also allows us to compute a social
optimum state in polynomial time. We simply assume all agents to be
pure altruists and compute the best Nash equilibrium.
\begin{cor}
  For symmetric singleton congestion games a social optimum state can
  be obtained in polynomial time.
\end{cor}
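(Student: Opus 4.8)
The plan is to reduce the task to computing a best Nash equilibrium of the game in which \emph{every} agent is a pure altruist, and then invoke Theorem~\ref{thm:singleALLES}. First I would record the (routine) observation that when $\beta_i = 1$ for all $i$, the individual cost of every agent is $c_i(S) = c(S)$, so a state $S$ is a Nash equilibrium of this game precisely when no single agent can decrease the social cost $c$ by a unilateral move; equivalently, the Nash equilibria are exactly the local minima of $c$ under single-player strategy changes. This set is non-empty, since $c$ attains a minimum over the finite state space.

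Next I would argue that the \emph{best} (i.e.\ minimum social cost) Nash equilibrium of the all-altruist game is in fact a social optimum. Let $S^\ast$ be any state minimizing $c$. Being a global minimum of $c$, $S^\ast$ is in particular a local minimum under single-player moves, hence a Nash equilibrium of the all-altruist game; therefore the best Nash equilibrium has social cost at most $c(S^\ast)$. Conversely, every Nash equilibrium is a state, so its social cost is at least $c(S^\ast)$. Hence the best Nash equilibrium of the all-altruist game has social cost exactly $c(S^\ast)$ and is itself an optimal state. Note that one genuinely needs the \emph{best} equilibrium here rather than an arbitrary one, since the all-altruist game can have strictly suboptimal local optima.

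Finally, since a game with only pure altruists is the special case $n_0 = 0$ of a game with only pure altruists and egoists, Theorem~\ref{thm:singleALLES} applies directly and yields a polynomial-time algorithm that, because existence is guaranteed in this instance, computes the best Nash equilibrium together with the corresponding assignment of agents to resources. The only step that requires care is the first one --- checking that pure-altruist best responses coincide exactly with social-cost-improving moves, so that global minimizers of $c$ really do appear among the Nash equilibria --- but this is immediate from the definition $c_i(S) = \beta_i c(S) + (1-\beta_i) d_i(S)$ with $\beta_i = 1$.
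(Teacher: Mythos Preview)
Your proposal is correct and follows exactly the approach the paper takes: treat all agents as pure altruists (the $n_0=0$ instance of Theorem~\ref{thm:singleALLES}) and compute the best Nash equilibrium, which must be a global social optimum since any global minimizer of $c$ is in particular a local minimizer and hence a Nash equilibrium of the all-altruist game. The paper states this in a single sentence without spelling out the local/global argument, but your more detailed justification is sound and matches the intended reasoning.
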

In case of asymmetric games, however, deciding the existence of Nash
equilibria becomes significantly harder.

\begin{thm}
\label{singletonNP}
It is \NP-hard to decide if a singleton congestion game with only pure
altruists and egoists has a Nash equilibrium if G is asymmetric and
has concave delay functions.
\end{thm}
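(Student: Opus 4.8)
The plan is to reduce from a known \NP-complete problem whose combinatorial structure matches the ``clash'' between egoists and a pure altruist that already caused non-existence in Example~\ref{exm:noNE}. A natural candidate is a satisfiability-flavored or partition-flavored problem; given the local nature of the obstruction and the fact that each agent's strategy space is a small subset of $E$, I would reduce from a variant of \textsc{3-SAT} (or possibly \textsc{Exact Cover / 3-Dimensional Matching}), building a gadget per variable and per clause. The core idea is that concavity lets us engineer delay functions whose marginal social cost $d'_e(n_e) = n_e d_e(n_e) - (n_e-1) d_e(n_e-1)$ is \emph{non-monotone}: adding the second agent to a resource can make it cheaper from the altruistic perspective than it was with one agent, exactly as in the example. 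This mismatch between what egoists want (low $d_e$) and what the altruist wants (low $d'_e$) is the lever that encodes Boolean choices.

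The construction I would carry out has the following steps. First, for each variable $x_j$ create a pair of resources $e_j^{\mathrm{true}}, e_j^{\mathrm{false}}$ together with a constant number of egoists restricted (via asymmetry) to just these two resources, plus a ``shared'' pure altruist or a small pool of altruists whose strategy space spans the variable gadgets; the stable configuration of the egoists relative to where the altruist sits encodes the truth value. Second, design clause gadgets: additional resources and agents whose equilibrium conditions~(\ref{eqn:NE}) — namely $d_0^{max}\le d_0^{min+}$ for egoists and $d_1^{max}\le d_1^{min+}$ for altruists — can be simultaneously satisfied if and only if at least one literal in the clause is set to the ``satisfying'' side. Third, verify both directions: a satisfying assignment yields a state meeting~(\ref{eqn:NE}) hence a Nash equilibrium, and conversely any Nash equilibrium, read through the gadget, produces a consistent satisfying assignment. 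Throughout, I would keep all delay functions concave and all numbers polynomially bounded, exhibiting the $d_e(\cdot)$ values explicitly on small domains (as in Example~\ref{exm:noNE}) so concavity is immediate.

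The main obstacle I anticipate is \emph{consistency propagation} in an asymmetric \emph{singleton} game: because each agent occupies only one resource, the only coupling between gadgets is through shared resources or through a shared pool of altruists, and I must prevent ``cheating'' equilibria in which the altruist(s) rest in a position that decouples the gadgets and trivially satisfies~(\ref{eqn:NE}) regardless of the clause structure. Handling this likely forces careful bookkeeping of the altruist pool's capacity and of the $d_1^{max}\le d_1^{min+}$ constraint across all altruist-accessible resources at once, since that single inequality is global. A secondary difficulty is tuning the concave delay values so that the egoist inequality and the altruist inequality pull in genuinely opposite directions only on the ``literal'' resources while remaining harmless elsewhere; this is a finite but delicate calibration, analogous to but more intricate than the four-value scheme $4,8,9,11$ used in Example~\ref{exm:noNE}. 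Once the gadget is pinned down, the equivalence proof and the polynomial-size check are routine.
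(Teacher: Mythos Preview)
Your overall direction---reduce from \textsc{3-SAT} and reuse the non-existence gadget of Example~\ref{exm:noNE}---matches the paper, but your proposed architecture is significantly more entangled than what the paper actually does, and the entanglement is precisely the source of the obstacles you flag.

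The paper's construction keeps the two roles completely separate. The satisfiability encoding uses \emph{only egoists}: each variable agent $X_i$ is an egoist choosing among $e^1_{x_i}$, $e^0_{x_i}$, and a single global ``escape'' resource $e_0$; each clause agent $C_j$ is an egoist choosing among the literal resources of its clause. The delay functions here are simply linear ($9x$ and $7x+3$), so this part is a standard congestion game and always has a Nash equilibrium on its own; the point is only that in every such equilibrium some variable agent sits on $e_0$ if and only if $\varphi$ is unsatisfiable. The \emph{single} altruist $u_0$ does not touch any variable gadget at all: her strategy space is $\{e_0, e_1, e_2\}$, where $e_1, e_2$ together with three further egoists $u_1,u_2,u_3$ form the instability gadget of Example~\ref{exm:noNE}. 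The altruist thus acts purely as a switch: if $e_0$ is empty she parks there and the egoists on $e_1,e_2$ stabilize; if a variable agent already occupies $e_0$ she is pushed into the gadget and the four-agent cycling of Example~\ref{exm:noNE} kicks in.

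By contrast, you propose to let the altruist(s) range over the variable gadgets and participate in encoding truth values. This forces you to control the global inequality $d_1^{\max}\le d_1^{\min+}$ across all variable resources simultaneously, and to rule out ``resting'' positions for the altruist that decouple the clauses---exactly the two difficulties you identify. The paper sidesteps both by confining the altruist to three resources and letting ordinary selfish congestion-game arguments handle the SAT part. If you revise along these lines, the calibration you worry about collapses to checking a handful of inequalities on $e_0,e_1,e_2$ rather than across the whole construction.
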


\begin{proof}
  We reduce from {\sc 3Sat}. Given a formula $\varphi$, we construct a
  congestion game $G_{\varphi}$ that has a Nash equilibrium if and
  only if $\varphi$ is satisfiable. Let $x_1,\ldots,x_n$ denote the
  variables and $c_1,\ldots,c_m$ the clauses of a formula $\varphi$.
  Without loss of generality~\cite{Tovey84}, we assume each
  variable appears at most twice positively and at most twice
  negatively.

  For each variable $x_i$ there is a selfish agent $X_i$ that chooses
  one of the resources $e^1_{x_i}$, $e^0_{x_i}$, or $e_0$. The
  resources $e^1_{x_i}$ and $e^0_{x_i}$ have the delay function $9x$
  and resource $e_0$ has the delay function $7x +
  3$. 
  For each clause $c_j$, there is a selfish agent $C_j$ who can choose
  one of the following three resources.  For every positive literal
  $x_i$ in $c_j$ he may choose $e^0_{x_i}$. For every negated literal
  $\bar{x_i}$ in $c_j$ he may choose $e^1_{x_i}$.  Note that there is
  a stable configuration with no variable agent on $e_0$ if and only
  if there is a satisfiable assignment for $\varphi$.  Additionally,
  there are three selfish agents $u_1$, $u_2$, and $u_3$ who can
  choose $e_1$ or $e_2$. Each of the resources $e_1$ and $e_2$ has
  delay $4$ if used by one agent, delay $8$ if used by two agents and
  delay $9$ otherwise. The only pure altruist $u_0$ chooses between
  $e_1$, $e_2$, and $e_0$. Note that the altruist chooses $e_1$, $e_2$
  if one of the variable agents is on $e_0$.

  If $\varphi$ is satisfiable by a bitvector $(x^*_1,\ldots,x^*_n)$, a
  stable solution for $G_\varphi$ can be obtained by placing each
  variable agent $x_i$ on $e^{x^*_i}_{x_i}$. Since
  $(x^*_1,\ldots,x^*_n)$ satisfies $\varphi$ there is one resource for
  each clause agent that is not used by a variable agent. Thus, we can
  place each clause agent on this resource, which he then shares with
  at most one other clause agent.  Let the altruist $u_0$ use $e_0$
  and $u_1$ and $u_2$ choose $e_1$ and $u_3$ choose $e_2$. It is easy
  to check that this is a Nash equilibrium.

  If $\varphi$ is unsatisfiable, there is no stable solution. To prove
  this it suffices to show that one of the variable agents prefers
  $e_0$. In that case the altruist never chooses $e_0$ and the agent
  $u_0,\ldots,u_3$ play the sub game of Example~\ref{exm:noNE}.  For
  the purpose of contradiction assume that $\varphi$ is not
  satisfiable but there is a stable solution in which no variable
  wants to choose $e_0$. This implies that there is no other agent,
  i.e. a clause agent, on a resource that is used by a variable
  agent. However, if all clause agents are on a resource without a
  variable agent we can derive a corresponding bit assignment which,
  by construction, satisfies $\varphi$.

  Therefore, $G_{\varphi}$ has a stable solution if and only if
  $\varphi$ is satisfiable.
\end{proof}

\section{General Games}
\label{sect:general}
For any singleton game $G$ with altruists and convex delay functions a
Nash equilibrium always exists. For more general network structures,
we show that convexity of delay functions is not sufficient. In
particular, this holds even for games with only pure altruists and
egoists in the case in which almost all delay functions are linear of
the form $d_e(x) = a_ex$, except for two edges, which have quadratic
delay functions $d_e(x) = a_ex^2$. For simplicity, we use some edges
with non-convex constant delay $b_e$. We can replace these edges by
sufficiently many parallel edges with delay $b_ex$. This
transformation is of polynomial size and yields an equivalent game
with only convex delays.

\begin{thm}
  It is \NP-hard to decide if a symmetric network congestion game with
  only pure altruists and egoists and quadratic delay functions has a
  Nash equilibrium.
\end{thm}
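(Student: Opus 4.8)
The plan is to reduce from a satisfiability-type problem, most naturally from {\sc 3Sat} again, mirroring the structure of the proof of Theorem~\ref{singletonNP} but now encoding the variable/clause selection through a network rather than through asymmetric singleton strategy spaces. The key idea is to embed the non-existence gadget of Example~\ref{exm:noNE} — which is symmetric and uses only two parallel resources with a concave (here we will realize it with quadratic) delay profile — as a subnetwork that is ``active'' exactly when the formula is unsatisfiable. Concretely, I would build a symmetric network $G_\varphi$ with a common source $s$ and sink $t$; every agent can route along any $s$--$t$ path, but the topology is arranged so that in any Nash equilibrium the agents effectively partition into a ``variable/clause'' part that behaves like the asymmetric singleton instance and a four-agent ``core'' part $u_0,\dots,u_3$ that plays the Example~\ref{exm:noNE} subgame. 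Symmetry is recovered by the standard trick: give each gadget an entry/exit edge with prohibitively large (constant, hence replaceable by parallel linear edges) delay so that no agent from outside ever wants to use it, and route the intended agents through it with short ``free'' edges; this forces each agent into its designated role while keeping every path formally available to everyone.

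The second step is to realize the non-convex pieces. The delay values in Example~\ref{exm:noNE} ($d(1)=4, d(2)=8, d(3)=9, d(4)=11$) and the analogous values in Theorem~\ref{singletonNP} are concave, not quadratic, so I cannot use them directly. Instead I would exhibit a concrete quadratic latency $d_e(x)=a_ex^2$, together with auxiliary linear/constant edges in series or parallel, whose composite delay as seen by an agent reproduces the required ordering of marginal and total costs — in particular a profile where the altruist's effective (marginal) delay function $d'_e$ is non-monotone, which by the discussion after Example~\ref{exm:noNE} is exactly what breaks existence. The point the excerpt stresses is that only two quadratic edges are needed; so the construction should isolate the non-monotonicity into one or two $a_ex^2$ edges placed in the core gadget, with everything else linear. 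I would then check the forward direction: a satisfying assignment yields a routing where each variable agent picks its ``true'' or ``false'' parallel edge, clause agents slot onto the remaining literal edge of each clause, the altruist $u_0$ takes a separate cheap path, and $u_1,u_2,u_3$ balance across the core — and verify via the marginal-cost characterization ($d'$-latencies for altruists, $d$-latencies for egoists) that no agent, altruistic or selfish, wants to deviate along any path in the whole network.

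For the reverse direction I would argue the contrapositive: if $\varphi$ is unsatisfiable, then in any candidate equilibrium the variable and clause agents cannot all avoid collisions, so some agent is forced onto the ``overflow'' edge $e_0$ (the analogue of the $7x+3$ resource); this in turn changes the cost landscape for the altruist $u_0$ so that $u_0$ abandons its private path and enters the core, at which point the four agents $u_0,\dots,u_3$ are exactly in the configuration of Example~\ref{exm:noNE} and admit no stable sub-state, hence $G_\varphi$ has no Nash equilibrium. Conversely, any Nash equilibrium of $G_\varphi$ forces all variable/clause agents to be conflict-free, which decodes to a satisfying assignment. The main obstacle I anticipate is the network realization of the overflow-triggering mechanism while keeping the game symmetric: in the asymmetric singleton proof the restriction ``clause agent $C_j$ may only use literal edges of $c_j$'' is free, but in a symmetric network every agent can use every path, so I must use delay magnitudes (large constants, later expanded into parallel linear edges as the excerpt notes) to make all ``illegal'' routes strictly dominated in every relevant state — and I must ensure these large penalties do not accidentally create a spurious equilibrium. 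Handling this coupling between the combinatorial encoding and the quadratic non-monotonicity gadget, and ruling out unintended equilibria globally rather than gadget-by-gadget, is where the bulk of the careful case analysis will go.
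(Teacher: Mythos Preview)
Your high-level plan matches the paper's: reduce from {\sc 3Sat}, build variable/clause gadgets mirroring Theorem~\ref{singletonNP}, attach a non-existence subgame that is triggered exactly when a variable agent overflows onto $e_0$, and then symmetrize via a fresh global source/sink with large-$M$ entry edges. The paper carries out exactly this two-stage construction (first asymmetric, then symmetric).

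There is, however, a concrete gap in your non-existence gadget. You propose to realize the Example~\ref{exm:noNE} phenomenon by engineering a quadratic-plus-linear composite whose altruist marginal $d'_e$ is non-monotone. This cannot work: for any convex $d_e$ --- in particular $a_e x^2$ --- the marginal $d'_e(n_e)=n_e d_e(n_e)-(n_e-1)d_e(n_e-1)$ is monotone (the paper states precisely this right after Example~\ref{exm:noNE}), and putting linear or constant edges in series only adds a further monotone term. Parallel edges do not ``compose'' into a single effective delay either; they just create alternative strategies. So a parallel-resource simulation of Example~\ref{exm:noNE} using only quadratic and linear delays is impossible in principle, and the sentence ``I would exhibit a concrete quadratic latency \ldots\ whose $d'_e$ is non-monotone'' is exactly the step that fails.

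The paper's way around this is to abandon the two-parallel-resource picture entirely and design a \emph{new} network gadget in which the altruist $u_0$ and a single egoist $u_1$ have several \emph{overlapping} $s$--$t$ paths sharing the two quadratic edges $e_4$ and $e_6$ in different combinations. The instability is then a four-state best-response cycle between $u_0$ and $u_1$ alone, driven by how their path choices change each other's costs through shared edges --- not by any non-monotonicity of a single-resource marginal. Once you replace ``simulate Example~\ref{exm:noNE}'' with such a path-overlap cycling gadget, the rest of your outline (the overflow trigger via $e_0$, the symmetrization by $Mx$-edges, and the domination argument ruling out unintended equilibria) goes through essentially as in the paper.
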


\begin{proof}
  We first reduce from {\sc 3Sat} to asymmetric congestion games.
  Again, we assume each variable appears at most twice positively and
  at most twice negatively. In a second step, we show that the
  resulting congestion games can be turned into symmetric games while
  preserving all necessary properties.

  \begin{figure}[tbp]
  \begin{minipage}{.6\linewidth}
    \centering \resizebox{\linewidth}{!}{\input 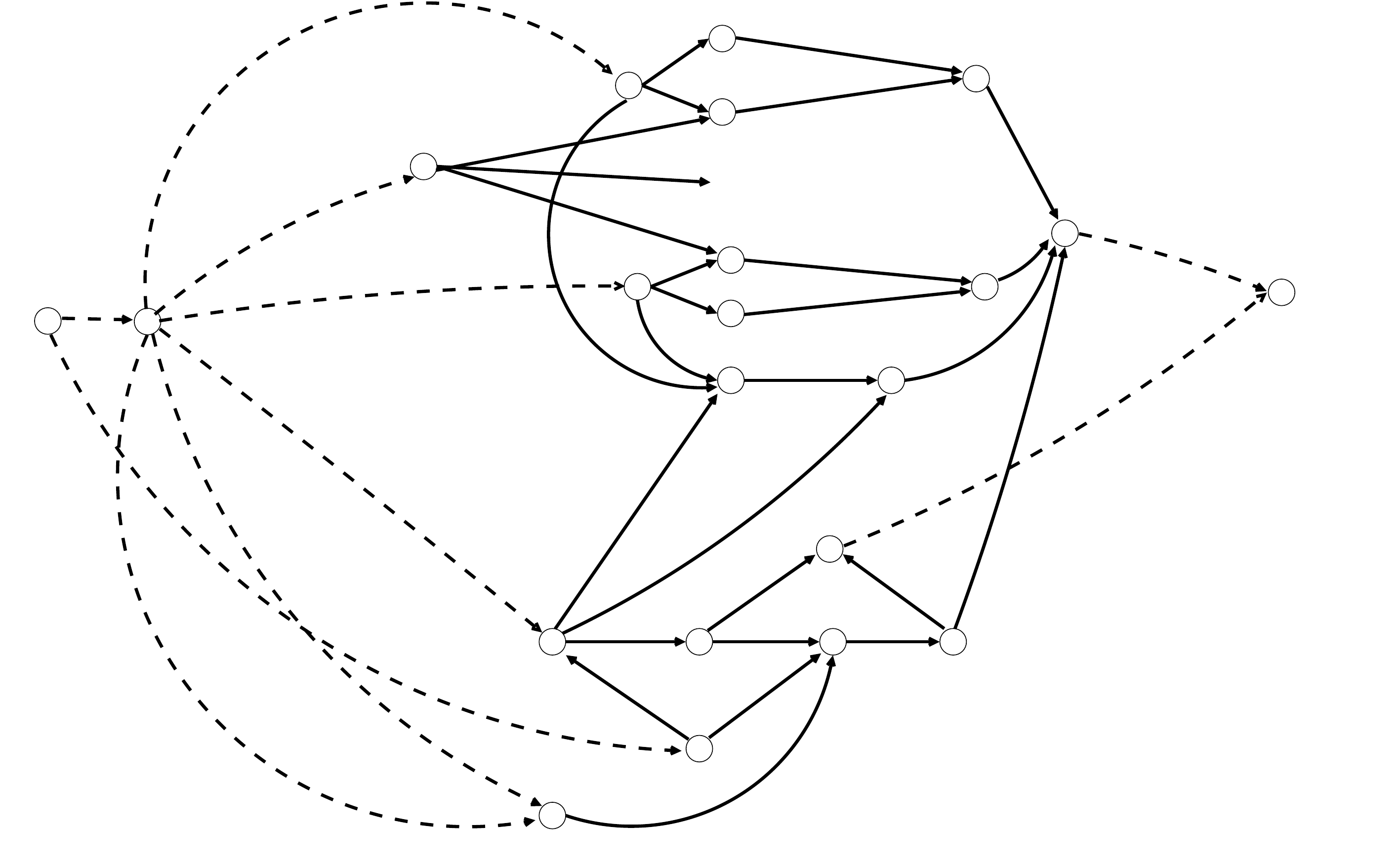_t}
      \caption{The structure of the 
      network of $G_\Phi$ (solid edges only)
      and $G'_\Phi$.}
    \label{fig:NPnet}
    
   \end{minipage}
   \,
\begin{minipage}{.4\linewidth}  
\begin{tabular}{|l|l|}
\hline
Edge & delay function \\
\hline
$e_0$ & $7x+3$ \\
$e_1$ & $2$ \\
$e_2$ & $17$ \\
$e_4$ & $2.4 x^2$ \\
$e_6$ & $x^2$ \\
$e_{10}$ & $18.5$ \\
$e^1_{x_i}$,$e^1_{x_i}$  & $9x$ \\
$(s,s_{x_i})$  $\forall 1 \le i \le n$ & $ Mx$ \\
$(s,s_{c_j})$  $\forall 1 \le j \le m$ & $ Mx$ \\
$(s,s_1)$, $(s,s_2)$, $(s,s')$ & $ Mx$ \\
$(s,s_0)$ & $ (n+m+5) M$ \\
$(t_0,t)$ & $ (n+m+5) M$ \\
$(t',t)$ & $ Mx$ \\
\hline
\end{tabular}
\captionof{table}{The delay functions on the edges of $G_\Phi$ and
  $G'_\Phi$. Edges that are not listed here have delay of $0$.}
    \end{minipage}

  \end{figure}
  Our reduction is similar to the construction that we used in the
  proof of Theorem~\ref{singletonNP}.  The structure of the resulting
  network congestion game $G_\Phi$ is depicted in
  Figure~\ref{fig:NPnet}.

  Each agent $X_i$ chooses one of three paths from his source node
  $s_{x_i}$ to his target node $t'$ and therefore uses exactly one of
  the edges $e^0_{x_i}$, $e^1_{x_i}$, or $e_0$.  Each clause agent
  $C_j$ uses a path from $s_{c_j}$ to $t'$ and uses one of the three
  edges as described in the proof of Theorem~\ref{singletonNP}. That
  is, for each positive literal $x_i$ in $c_j$ he may choose a path
  that includes the edge $e^0_{x_i}$. For every negated literal
  $\bar{x_i}$ in $c_j$ he may choose a path that contains the edge
  $e^1_{x_i}$.

  There is a selfish agent $u_1$ that chooses a path from $s_{1}$ to
  $t'$ and two selfish agents $u_2$ and $u_3$ that allocate the path
  from $s_{2}$ to $t'$.  Finally, one altruistic agent $u_0$ chooses
  a path from $s_0$ to $t_0$.  As in the proof of
  Theorem~\ref{singletonNP}, we can conclude that there is a variable
  agent whose best response includes edge $e_0$ if and only if $\Phi$
  is not satisfiable.  If no variable agent is on $e_0$, a Nash
  equilibrium can be obtained by placing $u_1$ on the path that begins
  with $(e_1,e_0)$. For agent $u_0$ it is optimal to choose the path
  $(e_8,e_6,e_3)$.

  However, if at least one variable agent is on the edge $e_0$, there
  is no Nash equilibrium.  If the altruist $u_0$ is on the path
  $(e_8,e_6,e_3)$, the best response for $u_1$ is the path
  $(e_4,e_5,e_6,e_7)$.
  If $u_1$ is the path $(e_4,e_5,e_6,e_7)$, the best response for the
  altruist $u_0$ is path $(e_7,e_4,e_2)$. If $u_0$ is on
  $(e_7,e_4,e_2)$, the best response for $u_1$ is the path that begins
  with the edge $e_{10}$. This, finally, is a state in which the best
  response for $u_0$ is $(e_8,e_6,e_3)$.  Thus, the constructed
  network congestion game $G_\Phi$ has a Nash equilibrium if and only
  if the formula $\Phi$ is satisfiable.

  Now, we turn the asymmetric network congestion game $G_\Phi$ into a
  symmetric congestion game $G'_\Phi$. We add a new source node $s$, a
  new target node $t$ and a node $s'$ to the network and connect them
  to $G_\Phi$ as depicted by the dashed edges in
  Figure~\ref{fig:NPnet}. Note that $M$ is an integer that is larger
  than the sum of possible delay values in $G_\Phi$.  If all agents
  play their best responses, then we can observe the following: Each
  outgoing edge of $s'$ is used by exactly one selfish agent and the
  altruist chooses a path that begins with the edge $(s,s_0)$. Every
  best response path of a selfish agent finishes with the edge
  $(t',t)$. Every best response path of the altruist ends with the
  edge $(t_0,t)$. Therefore, $G'_\Phi$ has a Nash equilibrium if and
  only if $G_\Phi$ has a Nash equilibrium.
\end{proof}

Perhaps surprisingly, if \emph{every} delay function is linear $d_e(x)
= a_ex + b_e$, then an elegant combination of the Rosenthal potential
and the social cost function yields a potential for arbitrary
$\beta_i$-altruists. Hence, existence of Nash equilibria and
convergence of sequential better-response dynamics is always
guaranteed. The proof is carefully constructed for altruists, as for
congestion games with general player-specific linear latency functions
a potential does not exist~\cite{Gairing06}. We only consider delays
$d_e(x) = a_ex$ without offset $b_e$, but as noted earlier, this is not
a restriction.

\begin{thm}
  For any congestion game with altruists and linear delay functions
  there is always a Nash equilibrium and sequential better-response
  dynamics converges.
\end{thm}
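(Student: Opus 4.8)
The plan is to exhibit an explicit exact potential function for the dynamics of a congestion game with altruists and purely linear delays $d_e(x) = a_e x$. First I would recall the two ingredients we have available: the Rosenthal potential $\Phi(S) = \sum_{e \in E} \sum_{x=1}^{n_e} d_e(x)$, whose exact drop under a deviation by agent $i$ equals the change in $d_i(S)$, and the social cost $c(S) = \sum_{e \in E} n_e d_e(n_e)$, whose drop equals the change in the altruistic ``marginal'' cost $d'_i(S)$ that agent $i$'s individual objective tracks. For linear delays these two functions have particularly clean forms: $\Phi(S) = \sum_e a_e \frac{n_e(n_e+1)}{2}$ and $c(S) = \sum_e a_e n_e^2$. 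The candidate potential should be a player-independent combination that, for each agent $i$, reproduces the change in $c_i(S) = (1-\beta_i) d_i(S) + \beta_i d'_i(S)$. The $\beta_i$'s are player-specific, so a single fixed linear combination of $\Phi$ and $c$ will not do directly; the trick (this is where the ``carefully constructed'' remark in the excerpt points) is that the coefficient mismatch washes out for linear delays. I would guess the potential is
\begin{equation}
\Psi(S) \;=\; \Phi(S) + c(S) \;=\; \sum_{e \in E} a_e \left( \frac{n_e(n_e+1)}{2} + n_e^2 \right),
\end{equation}
or a close scalar variant of it, and verify it below.

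The key step is the deviation computation. Fix a state $S$ and let agent $i$ move from $S_i$ to $S_i'$, producing state $S'$; write $\Delta X = X(S') - X(S)$ for any quantity $X$. Only resources in $S_i \setminus S_i'$ (congestion drops by one) and $S_i' \setminus S_i$ (congestion rises by one) are affected. For a resource $e$ with old congestion $n_e$, the contribution to $\Delta c$ of an agent leaving is $a_e((n_e-1)^2 - n_e^2) = -a_e(2n_e - 1) = -d_e'(n_e)$ where $d_e'(n_e) = n_e d_e(n_e) - (n_e-1)d_e(n_e-1) = a_e(2n_e-1)$ matches the marginal-cost function defined in the excerpt; for an agent arriving it is $a_e((n_e+1)^2 - n_e^2) = a_e(2n_e+1) = d_e'(n_e+1)$. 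Likewise $\Delta\Phi$ picks up $-d_e(n_e) = -a_e n_e$ for a departure and $+d_e(n_e+1) = a_e(n_e+1)$ for an arrival. Now compare with $\Delta c_i$. The individual cost $c_i(S) = (1-\beta_i)d_i(S) + \beta_i d_i'(S)$: the change $\Delta d_i$ is $\sum_{e \in S_i \setminus S_i'} (-d_e(n_e)) + \sum_{e \in S_i' \setminus S_i} d_e(n_e+1)$, which is exactly $\Delta\Phi$; and the change $\Delta d_i'$ (the altruist's perceived cost, using $d_e'$) is exactly $\Delta c$ — this is the standard identity that $d'$ is the Rosenthal-type summand for the function $c$. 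Hence $\Delta c_i = (1-\beta_i)\Delta\Phi + \beta_i \Delta c$. For this to have the same sign as $\Delta\Psi = \Delta\Phi + \Delta c$ regardless of $\beta_i \in [0,1]$, I would observe — and this is the crucial linear-delay miracle — that on each affected resource the per-edge contributions to $\Delta\Phi$ and to $\Delta c$ have the same sign (both negative for a departing agent, $-a_e n_e$ and $-a_e(2n_e-1)$; both positive for an arriving agent). So $\Delta\Phi$ and $\Delta c$ need not individually be negative, but any convex combination $(1-\beta_i)\Delta\Phi + \beta_i\Delta c$ is negative if and only if... — and here I must be careful, because the signs of $\Delta\Phi$ and $\Delta c$ as totals can still differ.

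The main obstacle is precisely this last sign-alignment issue, and handling it correctly is the heart of the proof. The resolution is to find the right scalar-weighted potential. Matching per-edge coefficients: I want weights $\lambda, \mu$ with $\Psi = \lambda\Phi + \mu c$ such that for a departure the edge term $-(\lambda a_e n_e + \mu a_e(2n_e-1))$ is a positive multiple of $-((1-\beta_i)a_e n_e + \beta_i a_e(2n_e-1))$ uniformly, and similarly for arrivals — which forces the dependence on $n_e$ to factor out. Writing $a_e n_e$ and $a_e(2n_e - 1) = 2a_e n_e - a_e$ (see the macro \DeltaC in the preamble), one sees the ``$-a_e$'' offset in the marginal term is what breaks a naive proportionality, but it appears identically in the arrival and departure terms with opposite overall sign, so a potential of the form $\Phi(S) + c(S) + (\text{linear correction in each } n_e)$ — concretely adding a term like $\sum_e \frac{a_e}{2} n_e$ or absorbing it — realigns things. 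I would therefore (i) write down $\Delta c_i$ and the candidate $\Delta\Psi$ using the preamble macros \DeltaPhi and \DeltaC as templates, (ii) check term by term on each resource in $S_i \setminus S_i'$ and $S_i' \setminus S_i$ that $\Delta\Psi$ and $\Delta c_i$ agree up to a positive factor independent of the edge and of $\beta_i$ (using that everything is linear in $n_e$ so the pieces line up), and (iii) conclude that $c_i(S') < c_i(S)$ implies $\Psi(S') < \Psi(S)$. Since $\Psi$ takes finitely many values and strictly decreases along every improving move, sequential better-response dynamics terminates, and the terminal state is a Nash equilibrium. I expect pinning down the exact scalar weights / correction term so that the alignment is genuinely uniform in $\beta_i$ to be the only delicate point; everything else is the routine Rosenthal-style bookkeeping.
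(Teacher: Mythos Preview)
Your overall plan---combine the Rosenthal potential with the social cost and add a correction term---is the right one, and your derivation $\Delta c_i = (1-\beta_i)\,\Delta\Phi + \beta_i\,\Delta c$ is correct. The gap is in the correction term you propose. A \emph{player-independent} linear correction such as $\sum_e \tfrac{a_e}{2} n_e$ cannot work. Writing $\Delta_A = \sum_{e\in S_i\setminus S_i'} a_e - \sum_{e\in S_i'\setminus S_i} a_e$, one has $\Delta c = 2\,\Delta\Phi - \Delta_A$, so any potential of the form $\lambda\Phi + \mu c + \gamma\sum_e a_e n_e$ changes by $(\lambda+2\mu)\,\Delta\Phi + (\gamma-\mu)\,\Delta_A$, while agent $i$'s cost changes by $(1+\beta_i)\,\Delta\Phi - \beta_i\,\Delta_A$. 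For a pure egoist ($\beta_i=0$) the improving condition is $\Delta\Phi<0$ with $\Delta_A$ unconstrained, which forces $\gamma=\mu$; but then for a pure altruist ($\beta_i=1$) the improving condition $2\Delta\Phi - \Delta_A < 0$ does not force $\Delta\Phi<0$. So no fixed $(\lambda,\mu,\gamma)$ works once the population contains heterogeneous $\beta_i$'s, and your step (ii)---finding a positive factor ``independent of $\beta_i$''---is impossible.

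The paper's fix is exactly to make the correction player-specific: it uses
\[
\Psi(S)\;=\;\Phi(S)+c(S)\;-\;\sum_{i=1}^{n}\sum_{e\in S_i}\frac{2\beta_i-1}{\beta_i+1}\,a_e,
\]
so that when agent $i$ deviates the correction contributes $-\tfrac{2\beta_i-1}{\beta_i+1}\,\Delta_A$, and a short computation gives $\Psi(S)-\Psi(S')=\tfrac{3}{1+\beta_i}\bigl(c_i(S)-c_i(S')\bigr)$. The scaling factor \emph{does} depend on $\beta_i$; this is a weighted potential, not an exact one, but that is all that is needed for convergence. Your instinct that ``a linear correction realigns things'' is right---you just need to let its coefficient vary with the deviating agent's altruism level.
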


\begin{proof}
  The theorem follows from the existence of a weighted potential
  $\Phi$ that decreases during every improvement step of any agent
  $i$ with altruism level $\beta_i$.

  \[ \Phi(S) = \sum_{e \in E} \sum_{j=1}^{n_e} a_e j + \sum_{e \in E}
  a_e n_e^2 - \sum_{i=1}^{n} \sum_{e \in S_i}
  \frac{2\beta_i-1}{\beta_i+1} a_e\]

  Consider a state $S$ and an improving strategy change of an agent
  $i$ from $S_i$ to $S'_i$ resulting in a strategy profile $S'$.  We
  show that $\Phi$ decreases. For the sake of clarity and brevity we
  set $\Delta_N = \DeltaPhi$ and $\Delta_C = \DeltaC$. Note that an
  improving strategy change requires $\left(1-\beta\right)\Delta_N +
  \beta \Delta_C > 0$.

  \begin{align*}
    \Phi(S) - \Phi(S') &=
    \Delta_N + \Delta_C - \sum_{e \in S_i \setminus S'_i}  \frac{2\beta_i-1}{\beta_i+1} a_e
    + \sum_{e \in S'_i \setminus S_i} \frac{2\beta_i-1}{\beta_i+1} a_e \\
    &= \left(1-\frac{2(2\beta_i-1)}{1+\beta_i}\right)  \Delta_N  + \Delta_C + \frac{2(2\beta_i-1)}{1+\beta_i} \Delta_N - \sum_{e \in S \setminus S'_i} \frac{2\beta_i-1}{\beta_i+1} a_e
    + \sum_{e \in S'_i \setminus S} \frac{2\beta_i-1}{\beta_i+1} a_e \\
    &= \left(1-\frac{2(2\beta_i-1)}{1+\beta_i}\right) \Delta_N
    + \Delta_C + \frac{(2\beta_i-1)}{1+\beta_i} \Delta_C
  \end{align*}
\[
    = \frac{3-3\beta_i}{1+\beta_i} \Delta_N  + \frac{3\beta_i}{1+\beta_i} \Delta_C =\frac{3}{1+\beta_i} \left( \left(1-\beta_i\right)\Delta_N  + \beta_i \Delta_C \right) > 0
\]
\end{proof}

Unfortunately, it follows directly from previous work~\cite{Fabri04}
that the number of iterations to reach a Nash equilibrium can be
exponential, and the problem of computing a Nash equilibrium is
\PLS-hard. For regular congestion games with matriod strategy
spaces~\cite{Acker06} Nash dynamics converge in polynomial time. It is
an interesting open problem if a similar result holds here.

\section{Stabilization Methods}
\label{sect:stabilize}
This section treats a model in which an institution can convince
selfish agents to act as altruists. For simplicity of presentation we
first restrict to games with only pure altruists and egoists. A
natural question for such an institution to consider is how many
altruists are required to guarantee that there is a Nash equilibrium
with a certain cost, e.g.~a Nash equilibrium as cheap as a social
optimum state. A similar question has been considered for Stackelberg
routing in the Wardrop model~\cite{Kaporis06,Sharma07}. We term this
number the \emph{optimal stability threshold}. In a more pessimistic
direction it is of interest to determine the minimum number of
altruists needed to guarantee that the worst-case Nash equilibrium is
optimal. We term this number the \emph{optimal anarchy threshold}. Let
us denote by $n_1^+$ and $n_1^-$ the optimal stability and anarchy
threshold, respectively. As a consequence from
Theorem~\ref{thm:singleALLES} we can compute both numbers for
symmetric singleton congestion games in polynomial time. For each
number of altruists we check if the best and/or worst Nash equilibrium
is as cheap as the social optimum.

\begin{cor}
  For symmetric singleton congestion games with only pure altruists
  and egoists there is a polynomial time algorithm to compute $n_1^+$
  and $n_1^-$.
\end{cor}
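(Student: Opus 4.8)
The plan is to invoke Theorem~\ref{thm:singleALLES} as a black box and search over the number of altruists. First I would compute the social optimum cost $c^*$ of $G$: since $G$ is a symmetric singleton game, by the corollary that a social optimum of a symmetric singleton congestion game can be computed in polynomial time, the value $c^*$ is available in polynomial time. Then, for each $k \in \{0,1,\ldots,n\}$, form the game $G_k$ obtained from $G$ by declaring $k$ arbitrary agents to be pure altruists and the remaining $n-k$ agents to be pure egoists. Because $G$ is symmetric and every agent is either a pure altruist or a pure egoist, the game $G_k$ depends only on $k$ and not on which agents are chosen, so it suffices to consider these $n+1$ games.

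For each $G_k$, apply Theorem~\ref{thm:singleALLES} to decide whether a Nash equilibrium exists and, if so, to compute the cost $b_k$ of the best Nash equilibrium and the cost $w_k$ of the worst Nash equilibrium. I would then set $n_1^+ = \min\{k : G_k \text{ has a Nash equilibrium and } b_k = c^*\}$ and $n_1^- = \min\{k : G_k \text{ has a Nash equilibrium and } w_k = c^*\}$, where the latter minimum is reported as ``undefined'' if no such $k$ exists. The set defining $n_1^+$ is never empty: in $G_n$ all agents are pure altruists, so its Nash equilibria are exactly the local optima of $c$, and the global minimizer of $c$ is in particular a local optimum, whence $b_n = c^*$ and $n_1^+ \le n$. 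For $n_1^-$ the discussion in the introduction shows that it can genuinely be undefined (even $G_n$ may have a suboptimal worst Nash equilibrium for concave delays), which is why the ``undefined'' branch is needed.

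Correctness of the two formulas is immediate from the definitions of the optimal stability and anarchy thresholds, once one observes that ``there is an optimal Nash equilibrium'' is equivalent to ``a Nash equilibrium exists and the cheapest one has cost $c^*$'', and ``every Nash equilibrium is optimal'' is equivalent to ``a Nash equilibrium exists and the most expensive one has cost $c^*$''. For the running time we perform one social-optimum computation together with $n+1$ invocations of the algorithm of Theorem~\ref{thm:singleALLES}, each polynomial, so the whole procedure is polynomial.

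The one point requiring slight care, rather than a genuine obstacle, is that I do not assume that $b_k$ or $w_k$ is monotone in $k$ — equilibria can behave non-monotonically when further agents are made altruistic — so the algorithm must examine all $k \in \{0,\ldots,n\}$ rather than binary-search, and it must simply skip those $G_k$ that possess no Nash equilibrium. Beyond this bookkeeping, the corollary is a direct consequence of Theorem~\ref{thm:singleALLES}.
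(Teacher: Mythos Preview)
Your proposal is correct and follows essentially the same approach as the paper: iterate over all $k\in\{0,\ldots,n\}$, invoke Theorem~\ref{thm:singleALLES} on the game with $k$ altruists, and compare the best (respectively worst) Nash equilibrium cost to the social optimum. The paper states this in one sentence before the corollary; your version simply fills in the bookkeeping (symmetry makes the identity of the $k$ altruists irrelevant, $n_1^+\le n$ is always finite, $n_1^-$ may be undefined, and no monotonicity in $k$ is assumed).
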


Note that the optimal anarchy threshold is not well-defined, because
the worst Nash equilibrium might always be suboptimal, even for a
population of altruists only. In case of symmetric singleton games and
convex delay functions, an easy exchange argument serves to show that
in this case any local optimum is also a global optimum. However, for
concave delay functions or asymmetric singleton games, a local optimum
might still be globally suboptimal.\footnote{
  Consider a symmetric game with two resources, $d_1(1) = 16$, $d_1(2)
  = 32$, $d_1(3) = 36$, and $d_2(x) = 45$. If all agents allocate
  resource 1, we get a Nash equilibrium of cost 108. In the optimum
  two agents allocate resource 2 resulting in a cost of 106. Now
  consider an asymmetric game with three resources and delay functions
  $d_1(x) = d_2(x) = 8x$, and $d_3(x) = 4x$. Agent 1 can use
  resources 1 and 2, agents 2 and 3 can use resources 2 and 3. The
  state $(2,3,3)$ is a Nash equilibrium of cost 32, while the social
  optimum is a state $(1,2,3)$ of cost 20.} Note that for symmetric
games, our algorithm is able to detect the cases in which suboptimal
local optima exist. In the asymmetric case, however, a similar
approach fails, because of the \NP-hardness of determining existence
of a Nash equilibrium. Thus, in the following we concentrate on the
optimal stability threshold.

In asymmetric games, it is also required to determine the identity of
agents, so here we strive to find a set (denoted $N_e^+$) of minimum
cardinality. For an optimal set of congestion values $n_E^* =
(n_e^*)_{e \in E}$ we can determine $N_1^+(n_E^*)$ such that there is
a Nash equilibrium of the game with congestion values $n_e^*$ for all
$e \in E$.

\begin{thm}
  For singleton games with only pure altruists and egoists and a
  social optimal congestion vector $n_E^*$ there is a polynomial time
  algorithm to compute $N_1^+(n_E^*)$.
\end{thm}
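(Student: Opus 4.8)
The plan is to reduce the problem to a single minimum-cost flow computation. The key point is that once the target congestion vector $n_E^*$ is \emph{fixed}, the Nash-equilibrium condition splits into constraints on the individual agents that are mutually independent. Concretely, suppose agent $i$ is placed on a resource $e\in\stratSpace_i$. If $i$ is an egoist, then $i$ has no improving move if and only if
\[ d_e(n_e^*)\ \le\ d_f(n_f^*+1)\qquad\text{for every } f\in\stratSpace_i\setminus\{e\}; \]
if $i$ is an altruist, then (using $d'_e(x)=xd_e(x)-(x-1)d_e(x-1)$ as in Section~\ref{sect:model}) $i$ has no improving move if and only if
\[ d'_e(n_e^*)\ \le\ d'_f(n_f^*+1)\qquad\text{for every } f\in\stratSpace_i\setminus\{e\}. \]
Here the case $f=e$ is a no-op and may be dropped, a deviation to $f$ raises $n_f$ by one, and --- crucially --- the right-hand sides depend only on the fixed vector $n_E^*$, while the individual cost of an egoist (resp.\ altruist) depends only on the congestions and not on which other agents are altruists. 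Thus whether agent $i$ at resource $e$ is stable is determined entirely by $i$, $e$, its label, and $n_E^*$. Call $e$ \emph{egoist-feasible} (resp.\ \emph{altruist-feasible}) for $i$ if the corresponding inequality above holds; all these labels can be computed in polynomial time.

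Next I build a flow network: a source $\sigma$, one node per agent $i\in N$, one node per resource $e\in E$, and a sink $\tau$; arcs $\sigma\to i$ of capacity $1$ and cost $0$; arcs $e\to\tau$ of capacity $n_e^*$ and cost $0$; and, for each agent $i$ and each $e\in\stratSpace_i$, an arc $i\to e$ of capacity $1$ and cost $0$ whenever $e$ is egoist-feasible for $i$, together with a parallel arc $i\to e$ of capacity $1$ and cost $1$ whenever $e$ is altruist-feasible for $i$. Since $\sum_{e\in E} n_e^*=n$, any flow of value $n$ saturates every arc into $\tau$ and hence routes exactly $n_e^*$ agents to each resource $e$.

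Integral flows of value $n$ then correspond exactly to pairs consisting of an assignment $S$ realizing $n_E^*$ and a labeling of the agents as egoists/altruists for which every agent is stable at its resource --- equivalently, for which $S$ is a Nash equilibrium of the game whose altruists are the agents routed on cost-$1$ arcs --- and the cost of the flow equals the number of altruists. A feasible flow always exists: because the social cost depends only on the congestion vector and $n_E^*$ is a social optimum, every state realizing $n_E^*$ is a global, hence a local, minimum of $c$ and therefore a Nash equilibrium when \emph{all} agents are altruists. A minimum-cost integral flow of value $n$ can be computed in polynomial time, and the set of agents carried on cost-$1$ arcs in an optimal flow is then a minimum-cardinality $N_1^+(n_E^*)$; it need not be unique, but any optimal flow yields one. (If one wants to handle an arbitrary, not necessarily realizable, target vector, the absence of a feasible flow simply reports infeasibility.)

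The only delicate point is stating the Nash condition correctly in the \emph{asymmetric} setting: unlike the symmetric case of Theorem~\ref{thm:singleALLES}, it cannot be collapsed into the two aggregate inequalities~(\ref{eqn:NE}), but must be kept in the per-agent, per-strategy-space form above, and one has to check that the labels of the other agents are irrelevant to a given agent's stability. Once that is in place, the observation that fixing the congestion vector decouples the agents' stability conditions is exactly what turns the question into a transportation problem and makes the remainder routine.
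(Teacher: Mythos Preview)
Your argument is correct and essentially the same as the paper's: both reduce to a minimum-cost bipartite assignment, with weight $0$ for placing an agent where she is stable as an egoist and weight $1$ where she must be made an altruist, and both use the key observation that at a socially optimal congestion vector every placement is altruist-stable. The only cosmetic difference is that the paper expands each resource into $n_e^*$ copies and computes a minimum-weight perfect matching, whereas you keep one node per resource with capacity $n_e^*$ and solve a min-cost flow; the paper also defers the ``only add a cost-$1$ edge when altruist-feasible'' refinement to a remark after the proof, while you build it in from the start.
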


\begin{proof}
  Suppose we are given a congestion vector $(n_e^*)_{e \in E}$ that
  results in minimum social cost. We now construct a weighted
  bipartite graph as follows. One partition is the set of agent
  $N$. In the other partition we introduce for each resource $e$ a
  number of $n_e^*$ vertices. If $e \in \stratSpace_i$ we connect
  agent $i$ to all vertices that were introduced due to $e$. If $e$
  represents a best-response for $i$, then we assign a weight of 0 to
  all corresponding edges between $i$ and the vertices of $e$. To all
  other edges we assign a weight of 1. Note that any feasible
  allocation of agents to strategies that generates the congestion
  vector $n_e^*$ is represented by a perfect matching. Due to social
  optimality an altruist can be matched with any strategy, while an
  egoist must be matched to a best response. If we match an agent to a
  strategy, which is not a best-response, it thus has to become an
  altruist and a weight of one is counted towards the weight of the
  matching. By computing a minimum weight perfect
  matching~\cite{Cook99}, we can identify a minimal set $N_1^+(n_E^*)$
  of altruists required.
\end{proof}

Observe that by creating the edges of cost 1 only to strategies which
represent best-responses with respect to the altruistic delay $d'$, we
can compute $N_1^+(n_E)$ for arbitrary congestion vectors $n_E$. In
this case, the set might be empty, if e.g.~the congestion vector
corresponds to a very expensive state and can never be generated by a
Nash equilibrium for any distribution of altruists. This case,
however, can be recognized by the absence of a perfect matching in the
bipartite graph.

This approach turns out to be applicable to an even more general
natural scenario. Suppose each agent $i$ has a \emph{stability cost}
$c_{ie}$ for each strategy $e \in \stratSpace_i$. This cost yields the
disutility for being forced to play a certain strategy given a
congestion vector $n_E$. In this scenario we slightly change
$N_1^*(n_E)$ to the set agents of minimal stability cost. Still, we
can compute this set by a minimum weight perfect matching if we set
the weights to $c_{ie}$ for all edges connecting $i$ to vertices of
$e$. The stability cost allows for general preferences exceeding
categories like altruists and egoists.

\begin{cor}
  For singleton games and a congestion vector $n_E$ there is a
  polynomial time algorithm to compute $N_1^+(n_E)$ with minimal
  stability cost.
\end{cor}

The underlying problem can be seen as a slot allocation to agents. As
the computed allocation has minimal stability cost, it is possible to
turn the algorithm into a truthful mechanism using VCG payments (see
e.g.~\cite[chapter 9]{Nisan07}). Our final mechanism (1) learns the
stability costs from each agent, (2) determines the allocation, and
(3) pays appropriate amounts to agents for truthful revelation of cost
values and adaptation of allocated strategies. In addition, it can be
verified that all computations needed require only polynomial time.

\begin{cor}
  For singleton games and a congestion vector $n_E$ there is a
  truthful VCG-mechanism to compute $N_1^+(n_E)$ in polynomial time.
\end{cor}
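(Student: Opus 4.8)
The goal is to show that the minimum-stability-cost allocation problem from the previous corollary can be turned into a truthful mechanism with VCG payments running in polynomial time. The plan is to invoke the standard VCG framework: we already have, from the preceding corollary, a polynomial-time algorithm that computes a welfare-optimal outcome (here, a perfect matching of agents to resource-slots minimizing $\sum_i c_{i e(i)}$, where $e(i)$ is the strategy assigned to $i$). VCG is exactly the mechanism that, given oracle access to such an exact social-welfare maximizer, charges each agent a payment equal to the externality she imposes on the others, and this renders truthful reporting of the $c_{ie}$ a dominant strategy. So the proof is essentially a verification that our setting fits the VCG template and that every ingredient stays polynomial.

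First I would set up the social-choice framework precisely: the set of alternatives is the set of feasible allocations, i.e. the perfect matchings in the bipartite graph of the previous corollary (agents on one side, $n_e$ copies of each resource $e$ on the other, edges only where $e \in \stratSpace_i$); agent $i$'s valuation of an allocation $A$ is $-c_{i,A(i)}$, which depends only on the slot $i$ receives. This is a utilitarian objective, so maximizing $\sum_i v_i(A) = -\sum_i c_{i,A(i)}$ is exactly the minimum-weight perfect matching we already know how to compute in polynomial time. Next I would define the VCG payment for agent $i$ as $p_i = h_i(c_{-i}) - \sum_{j \neq i} v_j(A^*)$, where $A^*$ is the optimal allocation on the reported costs; taking the Clarke pivot rule, $h_i(c_{-i})$ is the optimal value of $\sum_{j \neq i} v_j$ in the game with agent $i$ removed (equivalently, with $i$'s edges deleted — one must check a perfect matching of the remaining agents still exists, which it does whenever the full instance is feasible). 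Then I would cite the standard VCG incentive argument (e.g.\ \cite[chapter 9]{Nisan07}) that truthful revelation is dominant-strategy, since misreporting can only change $i$'s utility $v_i(A^*) - p_i$ by an amount weakly in $i$'s disfavor.

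Finally I would argue the running time: computing $A^*$ is one minimum-weight perfect matching; computing each $h_i(c_{-i})$ is one more such matching on a slightly smaller graph, so $n+1$ matching computations suffice, each polynomial by \cite{Cook99}, and the payments are then simple arithmetic. Hence the whole mechanism — eliciting the $c_{ie}$, computing the allocation, and paying — runs in polynomial time.

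The main obstacle, such as it is, is not mathematical depth but making sure the VCG hypotheses are genuinely met: namely that we have an \emph{exact} optimizer (VCG truthfulness breaks under approximate optimization), that agents' utilities are \emph{quasi-linear} in the payments and depend only on their own allocated slot (no externalities in the valuations themselves — true here since $c_{ie}$ is a private disutility and the congestion vector $n_E$ is fixed in advance), and that the Clarke-pivot instances remain feasible. All of these hold by construction, so the corollary is really a direct corollary of the previous one together with the VCG theorem; the write-up should therefore be short and point to the standard reference rather than reproving VCG.
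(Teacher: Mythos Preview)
Your proposal is correct and follows the same approach as the paper: the paper's justification is little more than the observation that the slot allocation admits an exact polynomial-time optimizer (the minimum-weight perfect matching from the preceding corollary), so standard VCG payments \cite[chapter 9]{Nisan07} yield truthfulness, with all computations polynomial. Your write-up supplies considerably more detail---the explicit social-choice setup, the Clarke pivot rule, the feasibility check for the $i$-removed instances, and the $n+1$ matching-computation bound---than the paper, which dispatches the corollary in three sentences.
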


These general results are restricted to the case of singleton
games. For more general games we show that it is \NP-hard to decide if
there is a Nash equilibrium as cheap as the social optimum. Our next
theorem establishes this even for symmetric network congestion games
with linear delays, in which an arbitrary Nash equilibrium and a
social optimum state can be computed in polynomial
time~\cite{Fabri04}. Furthermore, the result requires only a
series-parallel network. Thus, even in this restricted case it is
\NP-hard to decide if the number $n_1^+$ of pure altruists required is
0 or 1, or equivalently if $N_1^+(n_E^*)$ is empty or not. 

\begin{thm}
  For symmetric network congestion games with 3 agents, linear delay
  functions on series-parallel graphs and optimal congestions $n_E^*$
  it is \NP-hard to decide if there is a Nash equilibrium with
  congestions $n_E^*$.
\end{thm}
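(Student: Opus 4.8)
The plan is to reduce from a strongly NP-hard partition-style problem — I would use $3$-Partition or, more conveniently, a variant of Subset-Sum / Partition tailored to only three agents. The difficulty is that with only $3$ agents the combinatorial freedom is tiny, so the hardness must come from the rich structure of the series-parallel network and the numeric values of the linear coefficients $a_e$, not from the number of players. Concretely, I would build a series-parallel graph whose parallel components encode the items of a Partition instance: each item $j$ of size $w_j$ becomes a small parallel gadget (two parallel edges, one with coefficient tuned to $w_j$), and these gadgets are composed in series along the common $s$--$t$ path that all three agents must traverse. The agents' choices in each gadget correspond to which side of the partition the item is placed on.

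The key steps, in order, are as follows. First, fix the target optimal congestion vector $n_E^*$: in the social optimum the three agents split their load across the parallel alternatives in a way that corresponds to the "balanced" partition, so $n_E^*$ prescribes, gadget by gadget, how many of the three agents use each of the two parallel edges. Second, compute the social optimum itself and argue (via the standard convexity/exchange argument for linear delays) that this $n_E^*$ is indeed optimal, and that any state realizing $n_E^*$ has this structure. Third — the heart of the reduction — characterize when a state with congestions exactly $n_E^*$ is a Nash equilibrium: an agent on an edge with coefficient $a_e$ and congestion $n_e$ has marginal incentive to deviate to a parallel edge $f$ unless $a_e(2n_e-1) \le a_f(2n_f+1)$ (using the altruistic delay $d'_e(x)=a_e(2x-1)$ from Section~\ref{sect:general}; for a pure egoist the condition is instead $a_e n_e \le a_f(n_f+1)$). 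I would choose coefficients so that these local stability inequalities across all gadgets are simultaneously satisfiable \emph{if and only if} the Partition instance has a solution — i.e., the "slack" left by a balanced split is exactly enough to stabilize every gadget, while any unbalanced split creates an edge where some agent strictly prefers to jump. Fourth, wrap this up: if all three agents are egoists and $\varphi$-instance is a yes-instance, the balanced state is a Nash equilibrium with congestions $n_E^*$; if it is a no-instance, no state with congestions $n_E^*$ is stable, so one would need at least one altruist (equivalently $N_1^+(n_E^*)\neq\emptyset$), establishing that deciding whether $n_1^+=0$ is NP-hard.

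The main obstacle I anticipate is the simultaneous calibration of all the $a_e$ values so that (a) $n_E^*$ is genuinely the unique-structure social optimum, (b) the per-gadget Nash conditions are tight in exactly the partition-feasible configurations, and (c) everything remains series-parallel and polynomially bounded. Because there are only three agents, each gadget can carry at most a $3$-$0$, $2$-$1$, $1$-$2$, or $0$-$3$ split, so the arithmetic has to be packed into the edge weights rather than spread over many players; I expect to need a careful choice (perhaps large "base" weights plus small item-dependent perturbations, in the spirit of the $M$ used in the earlier symmetric-network construction) so that the dominant terms force the $2$-$1$/$1$-$2$ pattern in the optimum and the perturbation terms encode the numeric Partition constraint in the stability inequalities. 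Once the gadget is correct, the series composition and the symmetry (all agents share the same $s$--$t$ strategy set, as in the previous theorem's symmetrization step) are routine, and the reduction is clearly polynomial.
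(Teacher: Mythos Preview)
Your high-level picture (series composition of two-edge parallel gadgets, one gadget per item, coefficients encoding the item sizes, reduce from \textsc{Partition}) matches the paper. But the heart of your argument has a real gap: you propose to encode the \textsc{Partition} constraint through \emph{per-gadget} stability inequalities of the form $a_e n_e \le a_f(n_f+1)$. In a pure series composition of parallel gadgets, every $s$--$t$ path picks one edge per gadget, so the cheapest unilateral deviation is always to swap the choice in a \emph{single} gadget. The resulting Nash inequality in gadget $j$ involves only the coefficients of gadget $j$ and the local congestions there; it is completely decoupled from the other gadgets and hence from the sum $\sum_j w_j$. No amount of ``base weight plus perturbation'' calibration can make a collection of independent local inequalities express a global additive constraint like $\sum_{j\in S} w_j = \tfrac12\sum_j w_j$. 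As written, your reduction would either make every gadget individually stable (yes-instance regardless of the partition) or some gadget individually unstable (no-instance regardless), never tying stability to the partition arithmetic.

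The paper fixes exactly this by adding one extra parallel edge $f=(u,v)$ bypassing the whole chain, with $d_f(x)=\tfrac34 a x$. In the optimum one egoist sits on $f$ and the other two take edge-disjoint paths through the chain (so $n_e^*=1$ on every chain edge and on $f$). Now the binding Nash condition for each chain agent is the \emph{global} deviation to $f$: her total path delay must not exceed $d_f(2)=\tfrac32 a$. With $d_{e_1}(x)=2a_jx$ and $d_{e_2}(x)=a_jx$, an edge-disjoint split gives one agent delay $a+\sum_{j\in S}a_j$ and the other $a+\sum_{j\notin S}a_j$; both are $\le \tfrac32 a$ iff $S$ is a balanced partition. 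The bypass edge is the missing idea that converts local gadget choices into a single global sum constraint; once you add it, the careful perturbation machinery you anticipate becomes unnecessary. (Note also that this yields only weak \NP-hardness for three agents; the paper remarks that strong \NP-hardness via \textsc{3-Partition} needs a variable number of agents.)
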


\begin{proof}
  We reduce from {\sc Partition}. Let an instance be given by positive
  integers $a_1,\ldots,a_k$ and $a = \sum_{i=1}^k a_i$, where $a$ is
  an even number. Create a network with two nodes and two parallel
  edges $e_1$ and $e_2$ for each integer $a_i$. The delay $d_{e_1}(x)
  = 2a_ix$, and $d_{e_2}(x) = a_ix$. All these networks are
  concatenated sequentially. We denote the first node of this path
  gadget by $u$ and the last by $v$. In addition, we add one edge $f =
  (u,v)$ with delay $d_f(x) = \frac 34 ax$. Finally, the game has
  three egoists, which need to allocate a path from $u$ to $v$.

  The unique social optimum is to let one agent use $f$ and the other
  two agents use two edge-disjoint paths through the path
  gadget. This yields an optimal social cost of
  $\frac{15}{4}a$. However, for a Nash equilibrium each path through
  the gadget must not have more delay than $\frac 32 a$. If the
  instance of {\sc Partition} is solvable, then the elements assigned
  to a partition represent the edges of type $e_1$ that an agent
  allocates in Nash equilibrium. Otherwise, if the instance is not
  solvable, there is no possibility to partition the path gadget into
  two edge-disjoint paths of latency at most $\frac 32 a$.
  
  The reduction works for a small constant number of agents but only
  shows weak \NP-hardness. If the number of agents is variable, it is
  possible to show strong \NP-hardness with a similar reduction from
  {\sc 3-Partition}.
\end{proof}

We remark that the previous theorem contrasts the continuous
non-atomic case, in which a minimal fraction of altruistic demand
stabilizing an optimum solution can be computed in any symmetric
network congestion game~\cite{Kaporis06}.

\section{Conclusions}
In this paper, we have initiated the study of altruists in atomic
congestion games. Our model is similar to the one presented by Chen
and Kempe~\cite{Chen08} for nonatomic routing games, however, we
observe quite different properties. In the nonatomic case, existence
of Nash equilibria for any population of agents is always guaranteed,
even if agents are partially spiteful. In contrast, our study answers
fundamental questions for existence and convergence in atomic
games. For the case of linear latencies, an elegant combination of
social cost and the Rosenthal potential proves guaranteed existence
and convergence. The next step is to consider the price of anarchy and
the relations to results on Stackelberg games~\cite{Fotakis07}. An
altruistic variant of the price of
malice~\cite{Moscibroda06Evil,Meier08} measuring the influence of
altruists on the worst-case Nash equilibrium can be interesting to
consider. Finally, a characterization of games for which Nash
equilibria exist and best-response dynamics converge (in polynomial
time) is an important open problem.

\bibliographystyle{plain}
\bibliography{../../Bibfiles/game,../../Bibfiles/price,../../Bibfiles/mechanism,../../Bibfiles/steiner}
\end{document}